\pdfoutput=1
\documentclass[sigconf]{acmart}
\usepackage{amsthm}
\usepackage{algorithm}
\usepackage{algorithmic}
\usepackage{enumitem}
\newtheorem{theorem}{Theorem}
\usepackage{graphicx}
\usepackage{multirow}
\usepackage{balance}


\copyrightyear{2023}
\acmYear{2023}
\setcopyright{rightsretained}
\acmConference[KDD '23]{Proceedings of the 29th ACM SIGKDD Conference on Knowledge Discovery and Data Mining}{August 6--10, 2023}{Long Beach, CA, USA} 
\acmBooktitle{Proceedings of the 29th ACM SIGKDD Conference on Knowledge Discovery and Data Mining (KDD '23), August 6--10, 2023, Long Beach, CA, USA} 
\acmDOI{10.1145/3580305.3599536} 
\acmISBN{979-8-4007-0103-0/23/08}

\copyrightyear{2023}
\acmYear{2023}
\setcopyright{rightsretained}
\acmConference[KDD '23]{Proceedings of the 29th ACM SIGKDD Conference on Knowledge Discovery and Data Mining}{August 6--10, 2023}{Long Beach, CA, USA}
\acmBooktitle{Proceedings of the 29th ACM SIGKDD Conference on Knowledge Discovery and Data Mining (KDD '23), August 6--10, 2023, Long Beach, CA, USA}
\acmPrice{}
\acmDOI{10.1145/3580305.3599536}
\acmISBN{979-8-4007-0103-0/23/08}

\settopmatter{printacmref=true}

\begin{document}
\title{Unbiased Delayed Feedback Label Correction \\ for Conversion Rate Prediction}

\author{Yifan Wang}
\orcid{0000-0002-2933-6363}
\affiliation{%
  \institution{DCST, BNRist, Tsinghua University}
  \city{Beijing}
  \postcode{10084}
  \country{China}
}
\email{yf-wang21@mails.tsinghua.edu.cn}

\author{Peijie Sun}
\orcid{0000-0001-9733-0521}
\affiliation{%
  \institution{DCST, BNRist, Tsinghua University}
  \city{Beijing}
  \postcode{10084}
  \country{China}
}
\email{sun.hfut@gmail.com}

\author{Min Zhang}
\authornote{Corresponding author}
\orcid{0000-0003-3158-1920}
\affiliation{%
  \institution{DCST, BNRist, Tsinghua University}
  \city{Beijing}
  \postcode{10084}
  \country{China}
}
\email{z-m@tsinghua.edu.cn}

\author{Qinglin Jia}
\orcid{0000-0002-3583-6719}
\affiliation{%
  \institution{Noah's Ark Lab, Huawei}
  \city{Beijing}
  \country{China}
}
\email{jiaqinglin2@huawei.com}

\author{Jingjie Li}
\orcid{0000-0001-5253-1899}
\affiliation{%
  \institution{Noah's Ark Lab, Huawei}
  \city{Beijing}
  \country{China}
}
\email{lijingjie1@huawei.com}

\author{Shaoping Ma}
\orcid{0000-0002-8762-8268}
\affiliation{%
  \institution{DCST, BNRist, Tsinghua University}
  \city{Beijing}
  \postcode{10084}
  \country{China}
}
\email{msp@tsinghua.edu.cn}

\renewcommand{\shortauthors}{Yifan Wang et al.}

\begin{abstract}
Conversion rate prediction is critical to many online applications such as digital display advertising. To capture dynamic data distribution, industrial systems often require retraining models on recent data daily or weekly. However, the delay of conversion behavior usually leads to incorrect labeling, which is called delayed feedback problem. Existing work may fail to introduce the correct information about false negative samples due to data sparsity and dynamic data distribution. To directly introduce the correct feedback label information, we propose an Unbiased delayed feedback Label Correction framework (ULC), which uses an auxiliary model to correct labels for observed negative feedback samples. Firstly, we theoretically prove that the label-corrected loss is an unbiased estimate of the oracle loss using true labels. Then, as there are no ready training data for label correction, counterfactual labeling is used to construct artificial training data. Furthermore, since counterfactual labeling utilizes only partial training data, we design an embedding-based alternative training method to enhance performance. Comparative experiments on both public and private datasets and detailed analyses show that our proposed approach effectively alleviates the delayed feedback problem and consistently outperforms the previous state-of-the-art methods.
\end{abstract}

\begin{CCSXML}
<ccs2012>
<concept>
<concept_id>10002951.10003260.10003272</concept_id>
<concept_desc>Information systems~Online advertising</concept_desc>
<concept_significance>500</concept_significance>
</concept>
<concept>
<concept_id>10002951.10003317.10003347.10003350</concept_id>
<concept_desc>Information systems~Recommender systems</concept_desc>
<concept_significance>500</concept_significance>
</concept>
</ccs2012>
\end{CCSXML}

\ccsdesc[500]{Information systems~Online advertising}
\ccsdesc[500]{Information systems~Recommender systems}

\keywords{CVR prediction; Delayed Feedback; Feedback Label Correction}


\maketitle

\section{Introduction}
Predicting the probability of users clicking or converting on ads or items is critical to many online applications, such as digital display advertising and recommender systems. Take online advertising as an example. Generally, ad delivery platforms provide advertisers with several optional billing models, such as Cost Per thousand iMpressions (CPM), Cost Per Click (CPC) and Cost Per Acquisition (CPA), in which CPA is preferred as the conversion is closer to advertisers' profits. For the CPA model, predicting the click rate and conversion rate of users to the placed advertisements is the key to achieving more revenue, which are also known as the Click-Through Rate (CTR) and Conversion Rate (CVR) prediction tasks. These two tasks have received increasing attention from industry and academia in recent years \cite{din, deepfm,ctr21}.

Model freshness is important for CTR and CVR prediction models as user interests change dynamically. A common strategy to keep fresh in the industry is to retrain the model daily or weekly on all collected data. This simple strategy can be effective for CTR prediction. However, the delay of conversion behavior makes it challenging to ensure the freshness of CVR models, which is called \textit{Delayed Feedback Problem}. Unlike click behavior happening quickly within minutes of impression, conversions occur much more slowly after days, sometimes taking up to weeks \cite{201401}.
This leads that the ground truth of recently clicked but unconverted samples is unknown as they may convert in the future. 

A vanilla solution is to treat all these unconverted samples as negative feedback, which will cause some positive samples (i.e., real conversions) to be mislabeled, leading to the false negative problem. These mislabeled samples can significantly damage the performance of the CVR prediction model as they are important to the model freshness. Another obvious solution is to wait for a long time until the labels are accurate enough. However, this means that the data is old, which conflicts with the purpose of keeping models fresh. Thus, the delayed feedback problem reflects a trade-off between model freshness and label correctness.
Therefore, handling fresh unconverted data with unknown labels is an important challenge for CVR prediction.

Existing methods for the delayed feedback problem can be classified into two types based on the problem setting: online training \cite{201904,202102,202103,202106,202110,202202, nips22, calibration22} and offline training \cite{201401,201801,202006,202007}. 
In the online setting, when new user behaviors are logged, the model is immediately updated on the new data to keep it fresh.
In the offline setting, the model is retrained daily or weekly on all collected data to ensure freshness.
Industrial systems often choose the appropriate training setting based on their business requirements \cite{202106}. The methods under different settings are quite different, and here we only focus on the offline training.

To our knowledge, DFM \cite{201401} first studied the delayed feedback problem. They propose to explicitly model delay time with delay distribution assumptions. However, the actual delay may not obey the assumptions, which leads to its suboptimal performance. Recent work \cite{202006, 202204} in offline learning attempts to construct unbiased estimates of the oracle loss that uses true labels to alleviate the delayed feedback problem. However, although these methods are theoretically unbiased, we argue that they may fail to introduce the correct information about false negative samples, as they do not construct the correct samples corresponding to these mislabeled samples.

In this paper, we aim to address the delayed feedback problem in CVR prediction through label correction. We theoretically prove that if the label of the observed unconverted samples can be corrected to its probability of being a false negative sample, the label-corrected loss will be an unbiased estimate of the oracle loss that uses true labels. Compared to existing unbiased methods for delayed feedback, the advantage of label correction is that it directly complements the information of the correct samples corresponding to the false negative samples. Further, we attempt to train a label correction (LC) model to correct the labels for the observed unconverted samples. If the LC model is accurate enough, the delayed feedback problem can be well addressed.

However, how to train an accurate LC model is non-trivial. Above all, there exist no ready training data for the LC model. We use a counterfactual labeling method to construct artificial training data by setting a counterfactual deadline. Nevertheless, counterfactual labeling utilizes only partial training data. To enhance the performance of the LC model, we further designed an alternative learning method based on embedding transfer. To demonstrate the effectiveness of our method, we conduct extensive experiments on the public and private datasets. Experimental results show that our method can effectively alleviate the delayed feedback problem.
Our main contributions can be summarized as follows.
\begin{itemize}[leftmargin=*]
    \item To the best of our knowledge, it is the first work in the offline training setting to use an unbiased label correction approach to solve the delayed feedback problem of CVR prediction.
    \item We give a theoretical analysis of unbiased label learning and propose an alternative learning framework for CVR prediction to meet the delayed feedback challenge.
    \item Comparative experimental results on both public and private datasets demonstrate the effectiveness of our proposed framework.
\end{itemize}
\section{Related Work}
\subsection{CVR Prediction}
The CVR prediction task shares many similarities with the widely studied CTR prediction task. They both predict the probability of a user performing a certain behavior on an ad or an item. Besides, their inputs are generally the same. Generally, the model structure designed for the CTR task can also be applied to CVR prediction. Thus, existing research on CVR prediction focuses more on the differences between CVR and CTR.

There are three main challenges for CVR prediction. First, the data for the CVR task are often more sparse than the CTR task. Existing research mitigates this problem through multi-task learning \cite{201902,201903} and pre-training \cite{202001}. Second, CVR prediction suffers from selection bias. The CVR prediction model is trained on click samples but infers for all exposure samples during inference. Differences in exposure distribution and click distribution lead to selection bias, which existing work addresses through entire sample space modeling \cite{202003,201803,202101,ESCM2}, inverse propensity score \cite{202004,202005}, and doubly robust methods \cite{202008,202105}. Third, conversions do not happen as immediately as clicks, with some conversions taking days or even a week. This could result in some positive samples that have not yet converted being incorrectly treated as negative samples. Existing studies address it by delay time modeling \cite{201401, 201801, 202002} or importance sampling \cite{202006,202106}, which we will detail in Session 2.2.
In this work, we focus on the third challenge, the delayed feedback problem, and leave the extension of our method to other problems for future work.

\subsection{Delayed Feedback}
Here we only focus on the delayed feedback problem in the offline setting.

To our knowledge, the delayed feedback problem was first studied by DFM \cite{201401}. DFM models the delay time explicitly. It assumes that the delay time obeys an exponential distribution and then optimizes the maximum likelihood of the currently observed data labels. \cite{201801} extends this approach further by using a non-parametric approach to modeling delay time. A drawback of the above methods is that they try to optimize the observed conversion information instead of directly optimizing the true conversion information.

In contrast to explicitly modeling delay time, recent work \cite{202006, 202204} attempts to address the delay feedback problem by constructing unbiased estimates of the oracle loss that uses true labels. FSIW \cite{202006} leverages importance sampling to construct an unbiased loss.
Intuitively, it increases the weight of observed positive samples and decreases the weight of potentially negative samples as these samples may be mislabeled.
Besides, nnDF \cite{202204} assumes that the labels of samples before a time window are accurate and then uses these samples to correct for the biased loss of the whole training data.
A drawback of the above methods is that, despite their theoretical guarantee of unbias, they might fail to introduce information about the correct positive sample for each specific false negative sample. For FSIW, it only reduces the weight of the mislabeled samples but cannot introduce the information of the corrected sample, i.e., the weight of the corresponding correct sample is still zero. For nnDF, it does not process recent samples, and therefore cannot introduce information about the correct samples among them. This problem is worse when the data distribution has changed recently. As the information about the false negative samples may differ from the past observed positive samples, only using the observed positive samples cannot complement the correct information about the fresh false negative samples.

Unlike the above state-of-art approaches, we propose to correct the label for each observed negative sample so that the information of the correct sample can be introduced directly. Besides, it can also be proved theoretically that the label-corrected loss is an unbiased estimate of the oracle loss.
\section{Preliminaries}
\subsection{Notations}
In online advertising platforms, the user behaviors for the display ads are logged to train the CVR prediction model. Suppose we collect training data $\mathcal{D}$ at timestamp $T$, i.e., we can obtain all the user behaviors and corresponding features before $T$. Let $\mathcal{D} = \{\left(x_i, v_i, e_i, cts_i, cvt_i \right), i=1, 2, ...\}$. The notation $i$ denotes the $i$-th sample. Each sample represents a click record of users. For the $i$-th sample $\left(x_i, v_i, e_i, cts_i, cvt_i\right)$, $x_i$ denotes the feature information of this sample. $cts_i$ denotes the click timestamp. $v_i$ is a binary value that denotes whether the clicked ad has a further conversion before the observed timestamp $T$. If $v_i = 1$, $cvt_i$ will record the corresponding conversion timestamp. Otherwise, $cvt_i$ is empty. $e_i$ denotes the time elapsed from $cts_i$ to $T$, i.e., $T - cts_i$. 

Let $c_i$ denote whether the $i$-th sample will finally lead to a conversion. Note that we cannot wait forever for the possible conversion to happen. In practice, a long time window $w_a$ is applied depending on the specific scenario, e.g., one month for Criteo \cite{201401}. Only conversions within the time window after clicks are considered valid. In other words, if $e_i \ge w_a$, then $v_i = c_i$. If $e_i < w_a$, $c_i$ is unknown. Thus, $c_i$ is not included in the training data $\mathcal{D}$. For test data, we can wait enough time to obtain $c_i$ for evaluation.

For easy reading, the notations are summarized in Table \ref{notations}.

\begin{table}[h]
\caption{Notations and Explanations of Variables}
\label{notations}
\begin{tabular}{l|l}
\hline
\textbf{Notation} & \textbf{Explanation}         \\
\hline
$T$     & the training data collection timestamp          \\
$x_i$     & the feature vector of $i$-th sample                        \\
$v_i$      & the observed conversion label of $i$-th sample                  \\
$e_i$     & the elapsed time of $i$-th sample          \\
$c_i$     & the true conversion label of $i$-th sample          \\
$cts_i$     & the click timestamp of $i$-th sample          \\
$cvt_i$     & the conversion timestamp of $i$-th sample          \\
\hline
\end{tabular}
\end{table}

\subsection{Task Formulation}
The conversion rate is defined as the probability of the final conversion for a clicked ad, i.e., $p_{CVR} = p(c_i=1|x_i)$. The CVR prediction task under delayed feedback is aimed to use the training data $\mathcal{D}$ collected at $T$ to predict $p_{CVR}$ for the clicked ads after $T$.

Note that training samples clicked before $T - w_a$ (i.e., $e_i \ge w_a$) can be fed directly into the model without any processing as their labels are correct. Since the core issue for delayed feedback is how to handle the fresh data with unknown labels, we omit these data in the rest of this paper for simplicity, which does not influence the correctness of our proof and method.

\subsection{Vanilla and Oracle Loss}
Next, we introduce the two basic loss functions in the delayed feedback problem. Note that CVR prediction is essentially a binary classification problem. Generally, the cross-entropy loss is adopted for training the CVR model. Let $f(\cdot;\theta)$ denote the CVR model with trainable parameters $\theta$. Suppose we can now foresee the future and obtain an ideal dataset $\mathcal{D}^*$, which contains $c_i$ for each sample. Then the cross-entropy loss can be written as:
\begin{equation}
    \mathcal{L}_{oracle}(\mathcal{D}^*) = \frac{1}{|\mathcal{D}^*|} \sum_{i = 1}^{|\mathcal{D}^*|} \left(c_i\log f(x_i;\theta) + (1 - c_i) \log (1 - f(x_i;\theta))\right) ,
\end{equation}

Equation (1) is called the oracle loss $\mathcal{L}_{oracle}$ as we suppose the final conversion label $c_i$ for each click record is available. 

However, in practice, we cannot obtain the oracle label $c_i$ for each sample at the data collection timestamp $T$. If we ignore the delayed feedback and replace the oracle label $c_i$ with the observed label $v_i$, we can get the vanilla loss $\mathcal{L}_{vanilla}$ for CVR model training:
\begin{equation}
    \mathcal{L}_{vanilla}(\mathcal{D}) = \frac{1}{|\mathcal{D}|} \sum_{i = 1}^{|\mathcal{D}|} \left(v_i\log f(x_i;\theta) + (1 - v_i) \log (1 - f(x_i;\theta))\right) .
\end{equation}

Note that some samples may convert after the data collection timestamp $T$. Obviously, the vanilla loss will incorrectly treat some positive samples as negative samples, which will damage the performance of CVR prediction model.

\section{Unbiased Label Correction For Delayed Feedback Problem}
\subsection{Overall Framework}
\begin{figure*}[h]
  \centering
  \includegraphics[width=0.9\linewidth]{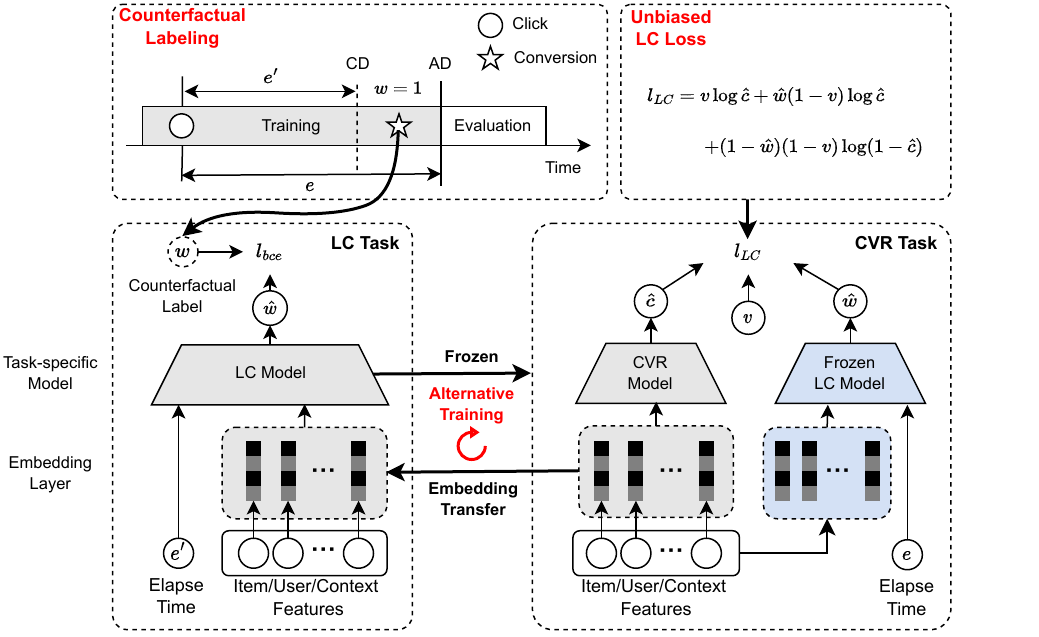}
  \caption{Illustration for our proposed framework ULC.}
  \label{model_fig}
\end{figure*}

We propose an Unbiased delayed feedback Label Correction framework (ULC), which aims to address the delay feedback problem in CVR prediction through label correction. The key idea is that delayed feedback leads to and only leads to incorrect labels. If we are able to identify all the incorrect labels and correct them, we can directly calculate the oracle loss.

Fig.\ref{model_fig} illustrates the overall framework of ULC, which consists of a label correction (LC) model and a CVR prediction model. The LC model is designed to predict the probability that an observed unconverted training sample will finally convert, which is used to calculate our proposed label-corrected loss for CVR model training. We prove in Section 4.2 that if the LC model is accurate enough, the label-corrected loss is an unbiased estimate of the oracle loss. 

The next question is how to learn an accurate LC model. As there is no ready training data for the LC model, we leverage counterfactual labeling to generate training data. It constructs the artificial data by imagining a counterfactual data collection time $T' < T$, the details of which will be introduced in Section 4.3. However, counterfactual labeling suffers from some problems, such as inadequate utilization of the whole training data, which we analyze in Section 4.4. To mitigate this problem, we further apply alternative training to re-train these two models, enhancing the performance by transferring the underlying representation of the CVR model to the LC model.

Next, we elaborate on our framework from unbiased loss, data generation with counterfactual labeling, and alternative training, respectively.

\subsection{Unbiased Loss via Label Correction}
If we have a label correction model $g(\cdot;\psi)$, which can predict the probability $w_i$ of a observed non-conversion sample $i$ to be a final conversion sample, i.e., $P(c_i = 1|x_i,e_i, v_i = 0)$. Then we can directly correct the sample label for the observed negative samples and get the following label-corrected (LC) loss:
\begin{equation}
\begin{aligned}
\mathcal{L}_{LC}(\mathcal{D}) = \frac{1}{|\mathcal{D}|}\sum_{i=1}^{|\mathcal{D}|} [ & v_i \log f(x_i;\theta) + w_i (1 - v_i) \log f(x_i; \theta) \\ &+ (1 - w_i)(1 - v_i) \log (1 - f(x_i; \theta)) ]
\end{aligned}
\end{equation}

The above loss is unbiased to the oracle loss if we have an ideal label correction model, as shown in the following theorem.

\begin{theorem}
    If an ideal label correction model is satisfied, i.e., $w_i = P(c_i = 1|x_i, e_i, v_i = 0)$, then the LC loss is unbiased to the oracle loss.
\end{theorem}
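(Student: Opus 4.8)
The plan is to establish unbiasedness by matching the two losses term by term in expectation, where the expectation is taken over the conversion-and-observation process with the features $x_i$ and the elapsed time $e_i$ held fixed. Because both $\mathcal{L}_{LC}(\mathcal{D})$ and $\mathcal{L}_{oracle}(\mathcal{D}^*)$ are normalized sums over the same index set, it suffices to show that, for a single sample, the expected label-corrected contribution equals the expected oracle contribution; linearity of expectation then delivers the full statement after averaging.

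First I would fix one sample, abbreviate $f = f(x_i;\theta)$, and compute the expected LC contribution by splitting on the observed label. When $v_i = 1$ the LC term collapses to $\log f$, and when $v_i = 0$ it equals $w_i \log f + (1-w_i)\log(1-f)$. Hence the expected contribution is $P(v_i=1)\log f + P(v_i=0)\big(w_i\log f + (1-w_i)\log(1-f)\big)$, with all probabilities conditioned on $(x_i,e_i)$.

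The key step is to fold the weight $w_i$ back into joint label probabilities using the defining identity $w_i = P(c_i=1\mid x_i,e_i,v_i=0)$, which gives $P(v_i=0)\,w_i = P(v_i=0,c_i=1)$ and $P(v_i=0)(1-w_i) = P(v_i=0,c_i=0)$. I would then invoke the two structural facts that tie observed to true labels, both implied by the setup in Section~3: an observed conversion is always a true conversion, so $v_i=1\Rightarrow c_i=1$ and therefore $P(v_i=1)=P(v_i=1,c_i=1)$; and a genuine non-conversion can never have been recorded as a conversion, so $c_i=0\Rightarrow v_i=0$ and therefore $P(v_i=0,c_i=0)=P(c_i=0)$. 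Substituting, the two coefficients of $\log f$ merge as $P(v_i=1,c_i=1)+P(v_i=0,c_i=1)=P(c_i=1)$, while the coefficient of $\log(1-f)$ becomes $P(c_i=0)$.

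This leaves the expected LC term equal to $P(c_i=1)\log f + P(c_i=0)\log(1-f)$, which is exactly the expectation of the oracle term $c_i\log f + (1-c_i)\log(1-f)$ obtained by linearity. Averaging over all samples then yields $\mathbb{E}[\mathcal{L}_{LC}(\mathcal{D})] = \mathbb{E}[\mathcal{L}_{oracle}(\mathcal{D}^*)]$, i.e.\ the LC loss is unbiased. I expect the main obstacle to be conceptual rather than computational: choosing the correct object to condition on (the features and elapsed time, with the labels as the only source of randomness) and recognizing that the deterministic correction weight must be reinterpreted through the joint law of $(v_i,c_i)$. The algebra collapses to the oracle form precisely because of the two implications $v_i=1\Rightarrow c_i=1$ and $c_i=0\Rightarrow v_i=0$, so the proof ultimately rests on stating and applying these structural constraints correctly rather than on any delicate estimate.
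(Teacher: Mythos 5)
Your proposal is correct and follows essentially the same route as the paper's proof: condition on $(x_i,e_i)$, fold the weight $w_i$ into the joint law of $(v_i,c_i)$ via $P(v_i=0)w_i = P(v_i=0,c_i=1)$, and merge terms using $v_i=1\Rightarrow c_i=1$ and $c_i=0\Rightarrow v_i=0$ before averaging over samples. The only difference is presentational --- you state the two structural implications explicitly, whereas the paper's chain of equalities uses them silently when passing from $p(v=1|x,e)$ to $p(v=1,c=1|x,e)$ and from $p(c=0,v=0|x,e)$ to $p(c=0|x,e)$ --- which is a welcome clarification rather than a new argument.
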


\begin{proof} For simplicity, we denote $\log f(x;\theta)$ as $l_\theta$ and $\log (1 - f(x;\theta))$ as $l_{-\theta}$.
\begin{equation*}
\begin{aligned}
    E[\mathcal{L}_{LC}] &= \iint p(x,e) [p(v = 1|x,e) l_\theta + p(c = 1|v=0,x,e)p(v=0|x,e) l_\theta \\ &+ p(c = 0|v=0,x,e)p(v=0|x,e) l_{-\theta})] dedx \\ 
    &=\iint p(x,e) [p(v = 1,c = 1|x,e) l_\theta + p(c = 1,v=0|x,e) l_\theta \\ 
    &+ p(c = 0,v=0|x,e) l_{-\theta})] dedx \\
   &=\iint p(x,e) [p(c = 1|x,e) l_\theta + p(c = 0|x,e) l_{-\theta})] dedx \\
 &=\int p(x) [p(c=1|x) l_\theta + p(c = 0|x) l_{-\theta}))] dx \\ &= E[\mathcal{L}_{oracle}]
\end{aligned}
\end{equation*}
\end{proof}

The advantage of LC loss over the previous unbiased loss (e.g., FSIW and nnDF) is that it directly complements the information of the correct sample corresponding to the false negative samples, i.e., $w_i(1 - v_i)\log f(x_i;\theta)$. The existing unbiased losses complement the corresponding information in indirect ways, which is strongly influenced by data sparsity and data dynamics. For example, in practice, FSIW complements the correct information by increasing the weights of observed positive samples similar to the false negative sample. However, for some fresh false negative samples, there may not exist similar observed positive samples due to the sparsity and dynamics of CVR data. In this case, these indirect methods cannot effectively supplement the corresponding positive sample information, which is important for the delayed feedback problem as false negative samples are often fresh. In contrast, the LC loss can adequately solve this problem as it directly corrects the label and complements the corresponding correct information.

The remaining problem is how to train an accurate LC model, which we introduce in the next section.

\subsection{Data Generation with Counterfactual Labeling}
For the LC model, there is no ready training data. Note that we need samples with $v_i = 0 \ \& \ c_i = 1$ as positive samples and with $v_i = 0 \ \& \ c_i = 0$ as negative samples. However, there are only samples with $v_i = 1 \ \& \ c_i = 1$ and samples with $v_i = 0$ in the original data. To train the LC model, we need to construct artificial samples.

We leverage a counterfactual method \cite{202006} to generate training data for the LC model. First, we imagine that the training data was collected at a \textit{counterfactual deadline} (CD) before the training data's \textit{actual deadline} (AD), i.e., $T$. The time interval $ \tau$ between the CD and the AD is a hyperparameter. Second, the samples that are clicked but have not converted before the CD are collected as training data, together with $e'$ as the elapsed time of these samples at the CD. Third, we treat the samples with conversion between CD and AD as positive samples, i.e., $w=1$, and others as negative samples, i.e., $w = 0$. Obviously, there exist some samples converting after AD are ignored. Nevertheless, as $\tau$ increases, the proportion of these samples keeps getting smaller. The subsequent experiments demonstrate that even a relatively short $\tau$ can effectively alleviate the delayed feedback problem. 

After data generation, we can train the LC model via the classical binary cross-entropy loss. Then the LC model is frozen and utilized to infer $w$ in the above LC loss. Note that the elapsed time $e$ at the AD is used instead of $e'$ when inferring for LC loss. The detailed data generation procedure is shown in Algorithm \ref{alg2} (lines 2-14).

\subsection{Alternative Training}
Although the training data required for the LC model can be constructed by counterfactual labeling, this method still has some drawbacks. First, data generation only leverages partial training data, i.e., samples that are clicked before CD and converted after CD, which may result in the suboptimal performance of LC model. Second, the LC model also suffers somewhat from delayed feedback. Some potential positive samples that have a long delay and convert after AD may be mistreated as negative samples. Next, we propose an alternative learning based approach to alleviate the first problem. The solution to the second problem we leave to future work.

Note that the CVR prediction model is trained on the whole data, and the conversion rate $P(c = 1|x)$ is similar to the label correction rate $P(c = 1|x,e, v = 0)$. We suppose that the bottom representation (i.e., the embedding layer in Fig.\ref{model_fig}) learned by the CVR model may facilitate the learning of the LC model and alleviate the first problem mentioned above. Therefore, we adopt an alternative learning paradigm. After training the CVR prediction model, the bottom representation of the LC model is initialized using the bottom representation of the CVR prediction model, and then the LC model is retrained. The retrained LC model can be further used for the retraining of the CVR model. The alternative training procedure is shown in Algorithm \ref{alg2}.

\begin{algorithm}
	\renewcommand{\algorithmicrequire}{\textbf{Input:}}
	\renewcommand{\algorithmicensure}{\textbf{Output:}}
	\caption{Alternative training with data generation}
	\label{alg2}
	\begin{algorithmic}[1]
            \REQUIRE training data $\mathcal{D} = \{(x_i, v_i, e_i, cts_i, cvt_i)\}$, $T$ is the timestamp when the data $\mathcal{D}$ are collected, where $x_i$ is the feature vector, $v_i$ is the observed conversion label, $e_i$ is elapsed time since the click timestamp $cts_i$, $cvt_i$ is the conversion timestamp. $\tau$ is a hyperparameter denoting the time interval between CD and AD. $n$ is the rounds of alternative training.
            \ENSURE CVR prediction model $M_{cvr}$
		\STATE Initialize CVR prediction model $M_{cvr}$ and LC model $M_{lc}$
            \STATE // \textit{start to generate data $\mathcal{D}_{lc}$ for LC model training}
		\STATE $\mathcal{D}_{lc} = \emptyset$
            \FOR{$i=1$ to $|\mathcal{D}|$}
                \IF{$cts_i < T - \tau$ and ($v_i == 0$ or $cvt_i > T - \tau$)}
                    \IF{$cvt_i > T - \tau$}
                        \STATE Label the sample $i$ as $w_i = 1$
                    \ELSE
                        \STATE Label the sample $i$ as $w_i = 0$
                    \ENDIF
                    \STATE Insert the sample $(x_i, e_i - \tau, w_i)$ to $\mathcal{D}_{lc}$
                \ENDIF
            \ENDFOR
            \STATE // \textit{finish data generation, start to alternative training}
            \FOR{$i = 1$ to $n + 1$}
                \STATE Initialize $M_{cvr}$
                \STATE Initialize $M_{lc}$ except bottom embeddings
                \STATE Training $M_{lc}$ on $\mathcal{D}_{lc}$ until converge
                \STATE Training $M_{cvr}$ on $\mathcal{D}$ using LC loss until converge
                \STATE Transfer the bottom embeddings from $M_{cvr}$ to $M_{lc}$
            \ENDFOR
            \RETURN CVR prediction model $M_{cvr}$
	\end{algorithmic}  
\end{algorithm}

There are some alternatives compared to alternative training with embedding transfer. For example, joint learning is also a common learning paradigm that enables the LC model to leverage the knowledge of the CVR model. Moreover, in addition to utilizing the learned representation of the CVR model, another easily thought of option is to leverage its prediction. It is possible to mine the mislabeled samples in the training data for the LC model using the CVR prediction model as these potential positive samples might have a high predicted CVR. We also conduct experiments and compare these alternatives in experiment section 5.4.
\section{Experiments}

To validate the effectiveness of our proposed method, we conduct a series of experiments to answer the following research questions:
\begin{itemize}[leftmargin=*]
\item[-] \textbf{RQ1:} How does ULC perform on the CVR prediction task compared to the state-of-the-art methods?
\item[-] \textbf{RQ2:} How do the label correctness and data freshness of counterfactual labeling affect the performance of ULC?
\item[-] \textbf{RQ3:} In addition to embedding-based alternative training, how do other common schemes such as joint learning perform?
\item[-] \textbf{RQ4:} How does the ULC model perform on samples with different delay time?
\end{itemize}

\subsection{Dataset and Settings}
\subsubsection{Datasets} 
To our knowledge, there exists only one public dataset \cite{201401} widely used in the research of the delayed feedback problem in the offline setting. Other public CVR datasets do not have noticeably delayed feedback or lack enough temporal information. Following the common settings in previous work \cite{201401,202006} of using one public and one private dataset, we also introduce a collected private production dataset. 

\textbf{Criteo dataset.} This public dataset contains clicks and the corresponding conversions from Criteo live traffic data. Each sample corresponds to a single click and is described by several categorical features and continuous features, with the corresponding conversion information, if any. It also includes the timestamps of the click and the possible conversion behavior. We use this dataset's last 23 days of data to conduct our experiments. Following previous work \cite{201401, 202006}, three consecutive weeks of data are leveraged as training data, data of the 22nd day is used for validation and the last day is for the testing.
Note that this dataset tracks conversion behavior for each click sample, so the ground truth $c_i$ is available for testing. For validation, we assume that $c_i$ is unknown and use the label-corrected loss for parameter selection. The processed dataset includes 6,363,085 click samples with a conversion rate of 0.2294.

\textbf{Production dataset.} This dataset is collected from a real production platform with game advertising.
In-game payments are treated as conversions. Specifically, we collected and sampled one-month consecutive user feedback logs. The data format is similar to Criteo, and the last two days are used for validation and testing, respectively. The dataset includes over 2,400,000 click samples with a conversion rate of about 0.005.
Statistics are shown in Table \ref{datasets}.

\begin{table}[h]\small
\centering
\caption{Statistics of the datasets.}
\label{datasets}
\begin{tabular}
{p{0.6cm}p{0.4cm}p{0.9cm}p{0.6cm}p{1.3cm}p{0.4cm}p{0.9cm}p{0.6cm}} 
\toprule
\textbf{Name} & \textbf{Days} & \textbf{\#clicks} & 
\textbf{CVR} & | \textbf{Name} & \textbf{Days} & \textbf{\#clicks} & \textbf{CVR}\\ 
\midrule
\textbf{Criteo} & \;\;23& 6,363,085 & 0.2294& |\,\textbf{Production} &\;\;30 & 2,400,000 & 0.0050\\ 
\bottomrule
\end{tabular}
\end{table}

\subsubsection{Evaluation Metrics}
We adopt three metrics that are widely used in CVR prediction tasks \cite{202106, 202103}. The first metric is area under ROC curve (\textbf{AUC}) that measures the pairwise ranking performance of the CVR prediction model. The second is area under the precision-recall curve (\textbf{PRAUC})\cite{202103}, which also measures the pairwise ranking performance.
The third one is the log loss (\textbf{LL}), which measures the accuracy of the absolute value of the CVR prediction. 

To further analyze the benefits gained by solving the delayed feedback problem, we calculate the relative improvements (\textbf{RI}) to the maximum gain (i.e., the improvement of the oracle model over the vanilla model) on the above three metrics. For method $f$, the relative improvements on metric $M(\cdot)$ is defined as $\frac{M(f) - M(Vanilla)}{M(Oralce) - M(Vanilla)}$. Then we can obtain \textbf{RI-AUC}, \textbf{RI-PRAUC} and \textbf{RI-LL}.

\begin{table*}[t]
\centering
\caption{Performance comparisons of the proposed model with baseline models on the public Criteo dataset. The best results are in boldface, and the best baselines are underlined. The superscripts ** indicate $p \le 0.01$ for the t-test of ULC vs. the best baseline. $\uparrow$ means the higher the better, and  $\downarrow$ is for the lower the better.}
\label{main}
\resizebox{0.72\textwidth}{!}{%
\begin{tabular}{l|l|cccccc}
\toprule
\textbf{Backbone} & \textbf{Method}     & \textbf{AUC $\uparrow$}                  & \textbf{PRAUC $\uparrow$}                & \textbf{LL $\downarrow$}                    & \textbf{RI-AUC $\uparrow$}               & \textbf{RI-PRAUC $\uparrow$}             & \textbf{RI-LL $\uparrow$}  \\ \midrule
\multirow{6}{*}{\textbf{MLP}} & \textbf{Vanilla}   & 0.8208                         & 0.6356                         & 0.4505                         & 0.000                          & 0.000                          & 0.000          \\ 
& \textbf{Oracle}   & 0.8441                         & 0.6595                         & 0.4025                         & 1.000                 & 1.000                          & 1.000          \\ \cline{2-8}
& \textbf{DFM}       & 0.8264                         & 0.6398                        & 0.4378                        & 0.2403                         & 0.1757                         & 0.2645         \\
& \textbf{FSIW}      & \underline{0.8335}                   & \underline{0.6465}                   & \underline{0.4178}                   & \underline{0.5450}                   & \underline{0.4560}                   & \underline{0.6812}   \\
& \textbf{nnDF} & 0.6859 & 0.4169 & 0.5969 & -5.789 & -9.150 & -3.05         \\  \cline{2-8}
& \textbf{ULC(ours)}     & \textbf{0.8403**}              & \textbf{0.6543**}              & \textbf{0.4104**}              & \textbf{0.8369**}                       & \textbf{0.7824**}              & \textbf{0.8354**}       \\ \hline
\multirow{6}{*}{\textbf{DeepFM}} & \textbf{Vanilla}   & 0.822                        & 0.6379                         & 0.4451                        & 0.000                          & 0.000                          & 0.000          \\ 
& \textbf{Oracle}   & 0.8442                         & 0.66                         & 0.4023                         & 1.000                 & 1.000                          & 1.000          \\ \cline{2-8}
& \textbf{DFM}       & 0.8266                         & 0.6416                         & 0.4319                        & 0.2072                         & 0.1674                         & 0.3084         \\
& \textbf{FSIW}      & \underline{0.8326}                   & \underline{0.6451}                   & \underline{0.4195}                   & \underline{0.4774}                   & \underline{0.3257}                   & \underline{0.5981}   \\
& \textbf{nnDF} & 0.6994 & 0.4332 & 0.596 & -5.522 & -9.262 & -3.525         \\ \cline{2-8}
& \textbf{ULC(ours)}     & \textbf{0.8393**}              & \textbf{0.6525**}              & \textbf{0.4104**}              & \textbf{0.7792**}                       & \textbf{0.6606**}              & \textbf{0.8107**}       \\ \hline
\multirow{6}{*}{\textbf{AutoInt}} & \textbf{Vanilla}   & 0.8232                         & 0.6383                         & 0.4468                         & 0.000                          & 0.000                          & 0.000          \\ 
& \textbf{Oracle}   & 0.8442                        & 0.6601                         & 0.4025                        & 1.000                 & 1.000                          & 1.000          \\ \cline{2-8}
& \textbf{DFM}       & 0.8276                         & 0.6412                         & 0.4306                         & 0.2095                         & 0.1330                         & 0.3656         \\
& \textbf{FSIW}      & \underline{0.8329}                   & \underline{0.646}                   & \underline{0.4192}                   & \underline{0.4619}                   & \underline{0.3532}                   & \underline{0.6230}   \\
& \textbf{nnDF} & 0.6903 & 0.4214 & 0.6123 & -6.328 & -9.949 & -3.735         \\ \cline{2-8}
& \textbf{ULC(ours)}     & \textbf{0.8388**}              & \textbf{0.6517**}              & \textbf{0.4114**}              & \textbf{0.7428**}                       & \textbf{0.6146**}              & \textbf{0.7990**}       \\ \hline
\multirow{6}{*}{\textbf{DCNV2}} & \textbf{Vanilla}   & 0.8229                         & 0.6386                         & 0.4454                         & 0.000                          & 0.000                          & 0.000          \\ 
& \textbf{Oracle}   & 0.8447                         & 0.6606                         & 0.4018                         & 1.000                 & 1.000                          & 1.000          \\ \cline{2-8}
& \textbf{DFM}       & 0.8272                         & 0.6417                        & 0.4325                         & 0.1972                         & 0.1409                         & 0.2958         \\
& \textbf{FSIW}      & \underline{0.8328}                   & \underline{0.6461}                   & \underline{0.4174}                   & \underline{0.4541}                   & \underline{0.3409}                   & \underline{0.6422}   \\
& \textbf{nnDF} & 0.6867 & 0.423 & 0.6101 & -6.247 & -9.8 & -3.777         \\ \cline{2-8}
& \textbf{ULC(ours)}     & \textbf{0.8391**}              & \textbf{0.6519**}              & \textbf{0.4105**}              & \textbf{0.7431**}                       & \textbf{0.6045**}              & \textbf{0.8004**}       \\
\bottomrule
\end{tabular}%
}
\end{table*}

\subsubsection{Compared Methods}
The following state-of-the-art methods are our baselines for solving delayed feedback in CVR prediction:

\begin{itemize}[leftmargin=*]
\item \textbf{Vanilla}: a CVR model trained with the observed conversion label. This is the lower bound of possible improvements.
\item \textbf{Oracle}: a CVR model trained with the ground truth label instead of observed labels. This is the upper bound of possible improvements.
\item \textbf{DFM}\cite{201401}: a CVR model trained using delayed feedback loss.
\item \textbf{FSIW}\cite{202006}: a CVR model trained using FSIW loss and pre-trained auxiliary models.
\item \textbf{nnDF}\cite{202204}: a CVR model trained using the nnDF loss.
\item \textbf{ULC(ours)}: a CVR model alternately trained with the LC model and the LC loss.
\end{itemize}

The above methods can be applied to different CVR models. Due to the column anonymity of the Criteo dataset, we cannot use the models that rely on user modeling. We consider the following classical models that focus on feature interactions as backbones:

\begin{itemize}[leftmargin=*]
\item \textbf{MLP}: the classical fully connected neural networks.
\item \textbf{DeepFM} \cite{deepfm}: a model combining the factorization machines and deep neural networks.
\item \textbf{AutoInt} \cite{autoint}: a model using multi-head self-attention to learn the high-order feature interactions automatically.
\item \textbf{DCNV2} \cite{dcnv2}: a model using deep and cross networks to learn effective explicit and implicit feature crosses.
\end{itemize}

\subsubsection{Implementation Details}
The embedding size is 64 for all the methods. The MLP model in all the backbones is a simple three-layer model with hidden units [256,256,128] and Leaky ReLU activation. For AutoInt, the layer number is 3, the number of heads is 2, and the attention size is 64. For DCNV2, we use the stacked structure and one cross-layer. Adam \cite{adam} is used as the optimizer, and the learning rate is tuned in the range of [1e-3, 5e-4, 1e-4] with L2 regularization tuned in [0, 1e-7, 1e-6, 1e-5, 1e-4]. The batch size is set to 1024 for all the methods except nnDF. Given that the nnDF approach cannot apply to batch-wise training, we set the batch size to the size of the whole training set. For a fair comparison, we consistently use the MLP as the auxiliary model for all methods that rely on auxiliary models. The additional hyperparameters for the baselines are fine-tuned. Early stopping is applied to obtain the best parameters. We repeat each experiment 5 times with different random seeds and report the average results and make the statistical tests. \footnote{The codes can be found at https://github.com/yfwang2021/ULC. A mindspore version: https://gitee.com/mindspore/models/tree/master/research/recommend/ULC.}

\subsection{Overall Performance: RQ1}
From Table \ref{main}, we can observe that our proposed method ULC outperforms all the baselines and achieves state-of-the-art performance on all the backbones. There are some further observations. First, the oracle model works significantly better than the vanilla model, which validates that the delayed feedback problem indeed hurts the performance of CVR model. Second, FSIW performs significantly better than the DFM and Vanilla models, which is consistent with previous studies \cite{202006}. However, nnDF is significantly weaker than Vanilla method. It is because nnDF loss requires global dependence computation and can only be optimized using full training data when updating, which leads to a weaker performance than batch-wise optimization methods. Third, compared to the best baseline, our method shows a significant improvement of 0.76\% in the AUC metric, 1.02\% in the PRAUC metric, and 1.85\% in the LL metric on average across the four backbones, which demonstrates the effectiveness of our proposed method.

We further analyze the benefits gained by solving the delayed feedback problem. 
As shown in Table \ref{main}, our method narrows the gap between Vanilla and Oracle by 77.55\% in the AUC metric, 66.55\% in the PRAUC metric, and 83.13\% in the LL metric on average across the four backbones. Compared to the best baseline, our method shows a significant improvement of 60.3\% in the RI-AUC metric, 81.43\% in the RI-PRAUC metric, and 27.76\% in the RI-LL metric on average across the four backbones. This shows that our method can effectively alleviate the delayed feedback problem.

Fig.\ref{private_performance} shows the offline performance on the production dataset. For limited space, we only present the results with MLP as the backbone. It is clearly observed that our proposed method alleviates the delayed feedback problem and outperforms the two best baselines.

To guarantee the reproducibility of our work, and also due to the page limitation, we make the following further detailed analyses on the public-available dataset.

\begin{figure}[h]
  \centering
  \includegraphics[width=\linewidth]{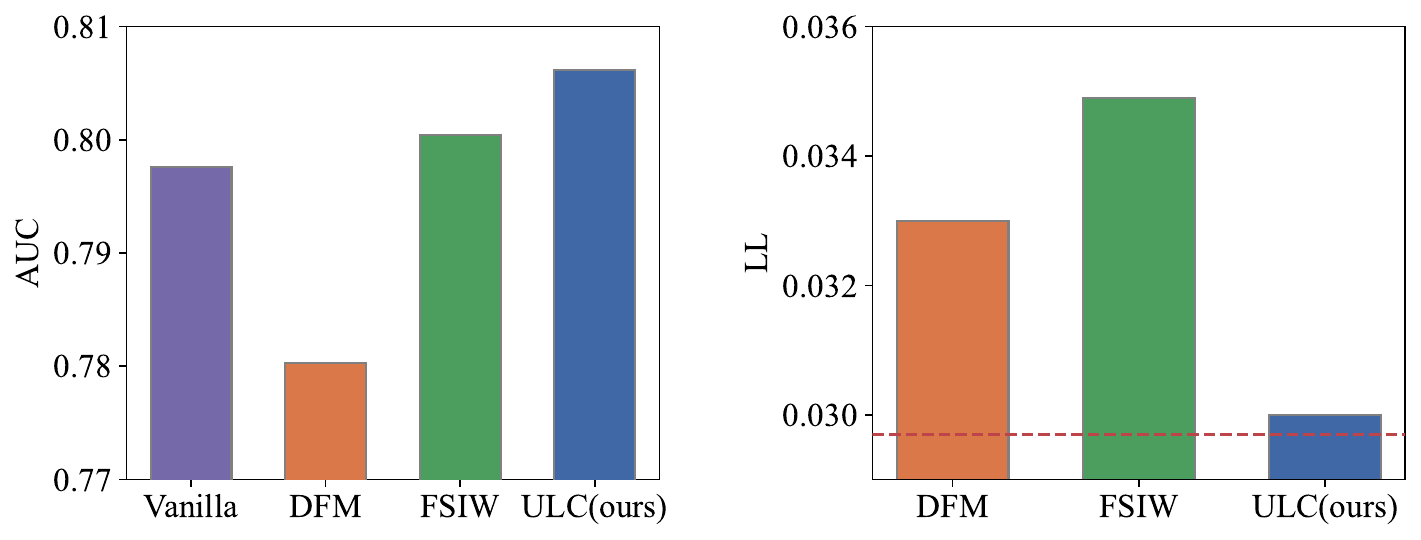}
 \caption{Performance comparisons of the proposed method with the top two baselines on the private dataset. The backbone model is MLP. The red dotted line in the right figure denotes Oracle.
 }
  \label{private_performance}
\end{figure}

\subsection{Analysis on Counterfactual Labeling: RQ2}
\begin{figure*}[t]
  \centering
  \includegraphics[width=15cm]{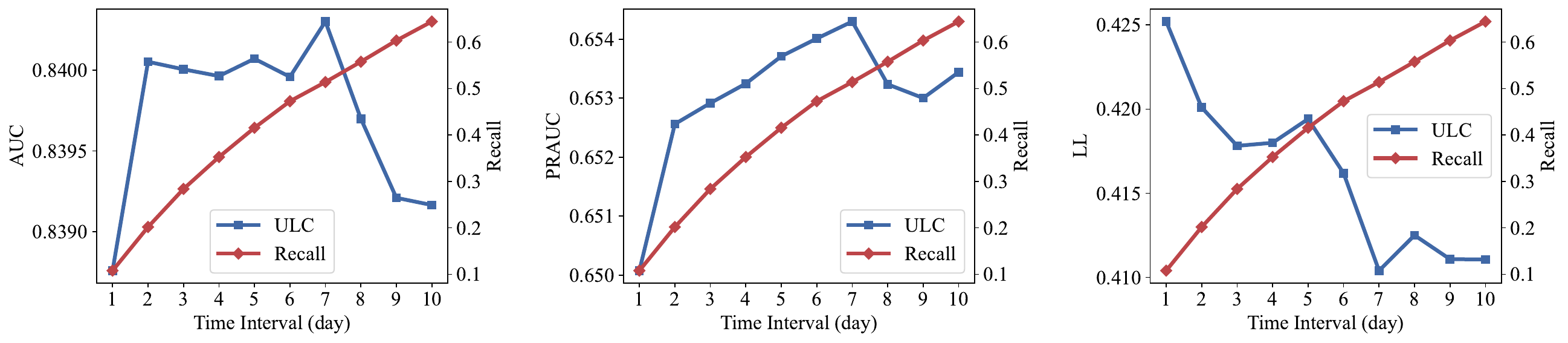}
  \caption{Effect of different time intervals between counterfactual deadline (CD) and actual deadline (AD) on the Criteo dataset with MLP as the backbone. The blue line represents the performance of ULC and the red line represents the recall of counterfactual labeling on positive samples. Larger recall means fewer mislabels in the training data of LC model.}
  \label{time_interval}
\end{figure*}

In counterfactual labeling, only the samples converted between CD and AD are treated as positive samples (i.e., $w = 1$), which leads to some samples converted after AD being mislabeled as negative samples. A long time interval between CD and AD can improve label correctness of counterfactual labeling but reduce data freshness as only clicked data before CD are utilized. We further analyze the effect of different time intervals.

Experimental results using different time intervals are shown in Fig.\ref{time_interval}. First, increasing the time interval can effectively increase the recall of counterfactual labeling on positive samples. Second, the best $\tau$ on the Criteo dataset is around a week. Besides, smaller or larger $\tau$ will reduce the performance of CVR model. Smaller $\tau$ leads to more mislabeled samples in the training data of LC model, which in turn leads to lower performance of CVR model. Larger $\tau$ values, while reducing the mislabeled samples, will make the training data of the LC model older, which leads to its inability to correct well for false negative samples in the CVR training data, since these false negative samples are relatively fresh.

\subsection{Effectiveness of Alternative Training: RQ3}
In addition to embedding-based alternative training, there are also some other design schemes that enable the LC model to exploit the knowledge of CVR prediction model. We conduct experiments on these schemes.

\subsubsection{Joint Learning Strategy}
An obvious solution is to jointly train the LC model and the CVR model, which also enables the LC model to utilize the information learned by the CVR model. We use a simple shared-bottom structure \cite{mmoe} to validate the effectiveness of this scheme. The joint loss is a linear weighting of the LC model loss and the CVR model loss.

\subsubsection{Prediction-based Alternative Training Strategy}
In alternative training, in addition to using the learned representation of the CVR model, another easily thought of option is to leverage its prediction. Note that in Section 4.3, we mention that some potential positive samples that have a long delay and convert after AD may be mislabeled as negative samples during counterfactual labeling. To alleviate this problem, we consider using the prediction of the CVR model to mine these potentially positive samples. Intuitively, samples with high predicted CVR are more likely to be potentially positive samples. Thus, we design three simple strategies to process the training data of the LC model: (i) hard strategy, i.e., negative samples ($w = 0$) with predicted CVR above a predefined threshold are treated as positive samples ($w = 1$); (ii) soft strategy, i.e., using predicted CVR as the label for each negative sample; (iii) drop strategy, i.e., dropping negative samples with predicted CVR above a predefined threshold from the training data of the LC model, as the labels of these samples are not reliable.

\subsubsection{Comparisons on Different Strategies}
The results of the above strategies on Criteo with MLP as backbone are shown in Fig.\ref{alternatives}. Results on AUC are similar to PRAUC and hence omitted. We have the following observations:
(i) using a simple joint learning scheme cannot improve performance. Instead, there is a large loss of performance. The reason is that the inaccurate LC model at the early training stage will mislead the CVR model, which in turn affects the subsequent training. (ii) the three prediction-based strategies cannot improve the performance of the CVR model and even cause a slight degradation. The potential positive samples after AD have a higher delay than $\tau$, and the number of these samples is very small compared to the number of true negative samples (about 1:50 ratio). Using only predicted CVR cannot effectively discover these samples; instead, it introduces noise.

\begin{figure}[h]
  \centering
  \includegraphics[width=8cm]{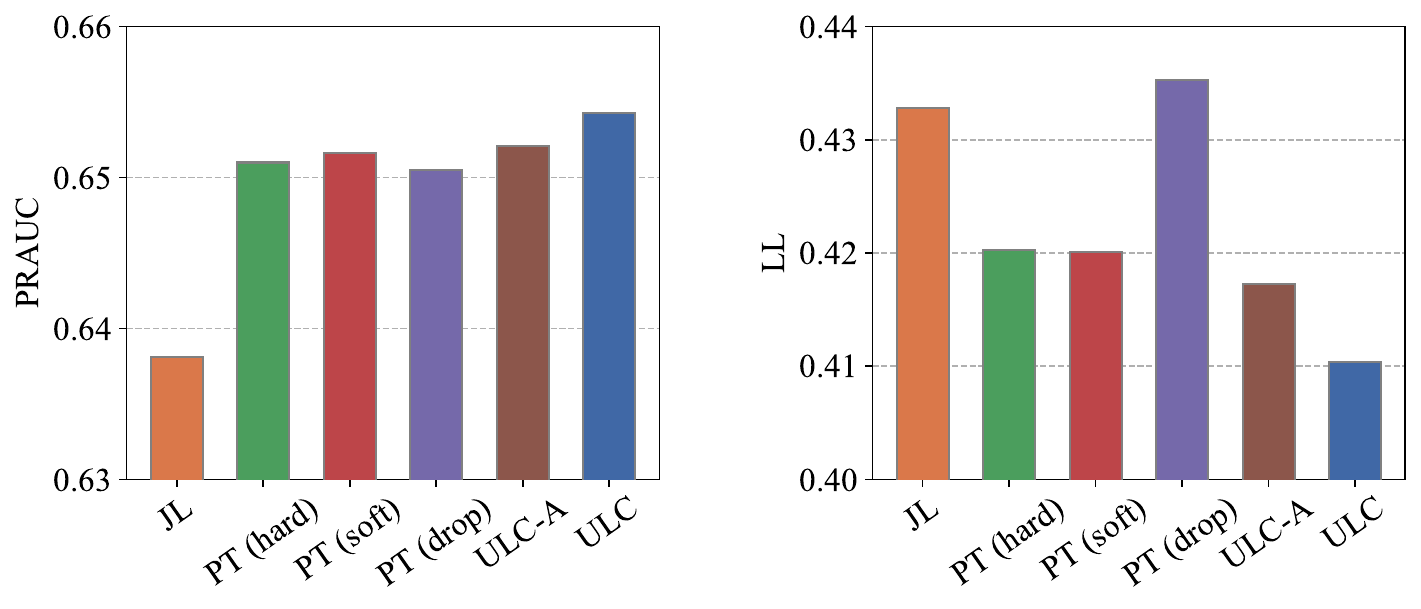}
 \caption{Comparisons on different training strategies. Dataset: Criteo. Backbone: MLP. "ULC-A": Proposed ULC without alternative training. "JL": Joint learning. "PT": Prediction-based alternative training. 
 }
  \label{alternatives}
\end{figure}

\begin{figure}[t]
  \centering
  \includegraphics[width=\linewidth]{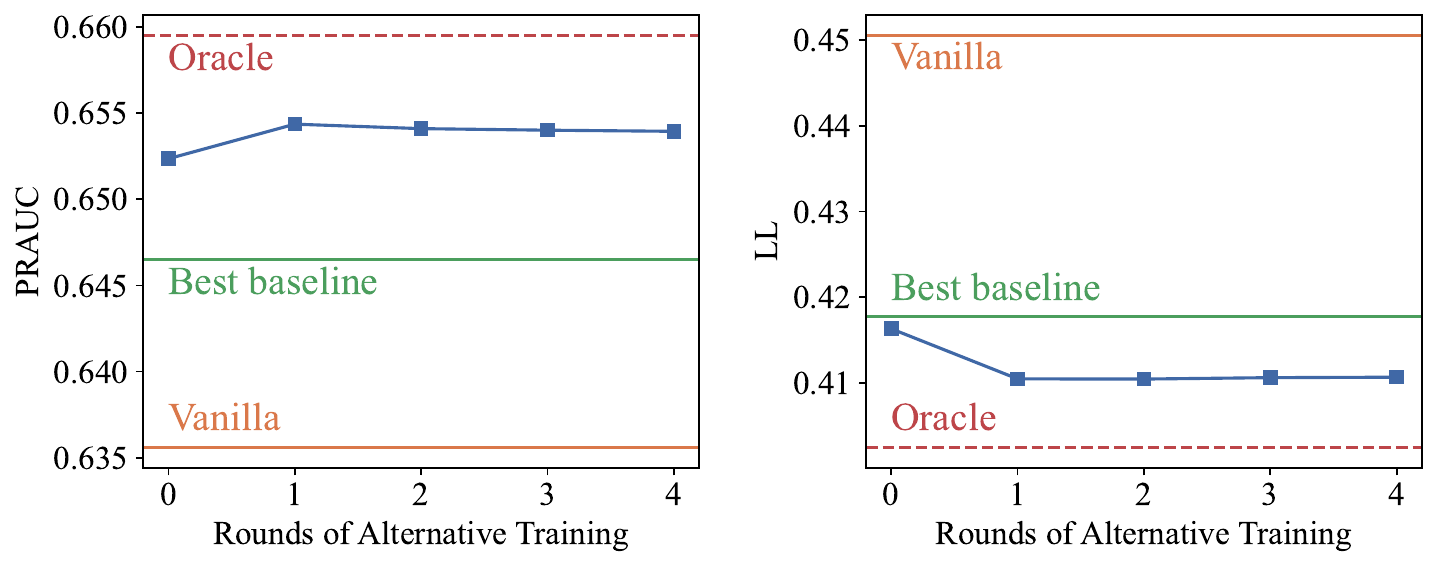}
  \caption{Performance w.r.t. the alternative training rounds. Dataset: Criteo. Backbone: MLP. Note that $0$ on the $x$ axis means no alternative training.}
  \label{alt_number}
\end{figure}

\subsubsection{Sensitivity of Alternative Training Rounds}
We further analyze the effect of alternative training rounds on the performance of ULC. $n$ controls the rounds of alternative training, and $n=0$ means no alternative training. We conduct experiments using different values of $n$ on the Criteo dataset with MLP as the backbone. As shown in Fig.\ref{alt_number}, alternative training once can significantly improve the performance of the CVR prediction model, which validates the effectiveness of alternative training. Besides, one round is enough, and more rounds have little impact, which is reasonable since the first round that changes the initialization of the LC model from random to the embeddings of CVR model brings more significant changes than the subsequent rounds. Note that even without alternative learning, the performance of ULC is still significantly better than the best baseline, which reflects the effectiveness of using the LC model for label correction. Results on AUC are similar to PRAUC and hence omitted.

\subsection{Analysis on Different Delay Time: RQ4}
The delay time is an important property of delayed feedback. We further analyze the performance of CVR model and LC model on samples with different delay time.

\subsubsection{CVR performance on different delay time}
For a fair comparison, we divide the positive samples in the test set into five groups in ascending order based on their delay time. Each group has the same number of positive samples. Then, each group is combined with all the negative samples in the test set to form test sets with different delay times. In this way, the number of positive and negative samples is the same for different test sets. Further, since the log loss is sensitive to the conversion rate, to ensure that the conversion rate in the test set is consistent with the original test set, we duplicate five copies of each positive sample.

Experiment results on the Criteo dataset with MLP as backbone are shown in Fig.\ref{delay_interval}. We have the following observations: 
(i) for the Oracle model without the delayed feedback problem, its performance decreases somewhat as the sample delay time increases, which indicates that samples with a long delay time are more likely to be hard samples. (ii) as the sample delay time increases, the Oracle model performs increasingly better than Vanilla, which is because positive samples are more likely to be false negative samples as the delay time increases. (iii) our method significantly outperforms the Vanilla model and the best baseline on samples with high delays (e.g., G3, G4, and G5), and our boost increases as the delay time increases, which reflects the effectiveness of our method.

\begin{figure}[t]
  \centering
  \includegraphics[width=\linewidth]{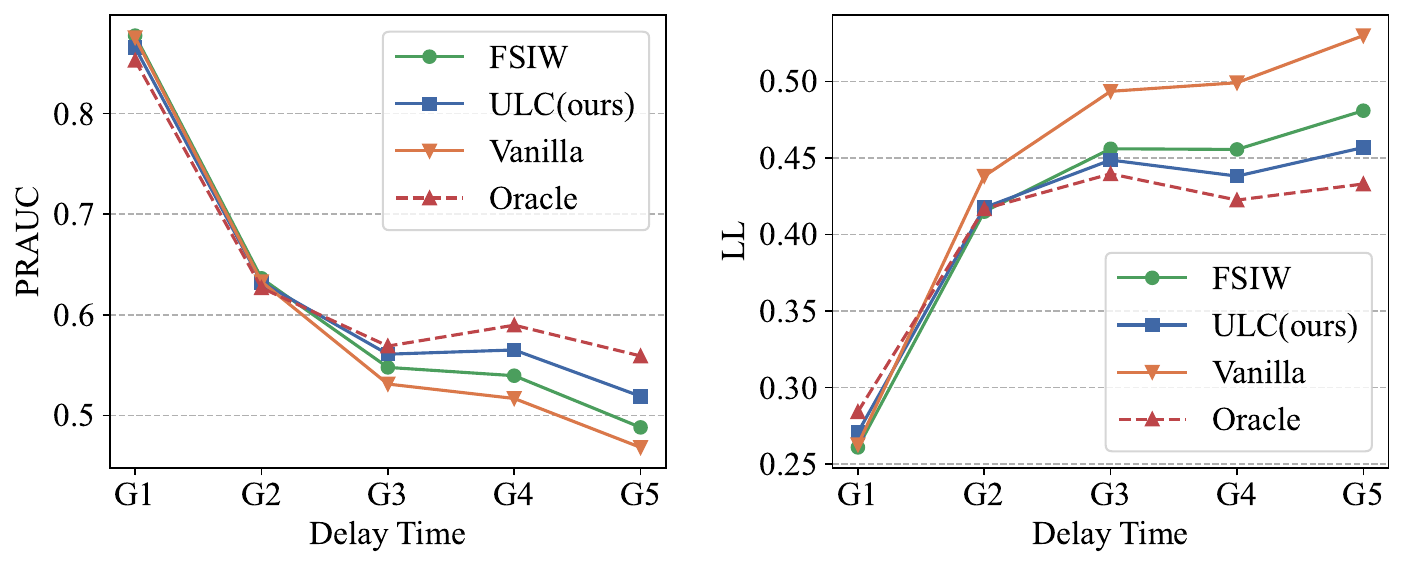}
  \caption{Performance w.r.t. test samples with different delay time on the Criteo dataset with MLP as the backbone. A larger value of the $x$ axis means a longer delay time. Results on AUC are similar to PRAUC and hence omitted.}
  \label{delay_interval}
\end{figure}

An interesting phenomenon is that the Vanilla model performs better than the Oracle model on samples with short delays (G1). It may be because samples with short delays have a higher percentage of observed positive samples than actual positive samples. Further analysis can be found in Appendix.

\subsubsection{LC performance on different delay time}
We further analyze the performance of the LC model on samples with different delay time. Similarly, we divide the false negative samples in the training data into five groups in ascending order based on their delay time. Each group has the same number of false positive samples. Then, as the goal of the LC model is to distinguish between false negative samples and true negative samples, each group is combined with all the true negative samples in the training data to form evaluation data with different delay time.

\begin{table}[h]
\centering
\caption{Label correction performance of ULC w.r.t. samples with different delay time. G5 is the group with the longest delay, and G1 has the shortest delay. $\uparrow$ means the higher the better, and  $\downarrow$ is for the lower the better.}
\label{lc_performance}
\resizebox{0.4\textwidth}{!}{%
\begin{tabular}{l|ccccc}
\toprule
Metrics & G1    & G2    & G3     & G4    & G5     \\ \midrule
AUC $\uparrow$    & 0.8698 & 0.8350 & 0.8117 & 0.7896 & 0.7811 \\
PRAUC $\uparrow$  & 0.1397 & 0.0757 & 0.0545 & 0.0434 & 0.0398 \\
LL  $\downarrow$    & 0.0549 & 0.0584 & 0.0604 & 0.0621 & 0.0628 \\ \bottomrule
\end{tabular}%
}
\end{table}

Experiment results on the Criteo dataset are shown in Table \ref{lc_performance}. We have the following observations: (i) AUC ranges from 0.7811 to 0.8698 at different delay time, which reflects that the LC model can effectively recognize false negative samples from all negative samples. (ii) the performance of the LC model decreases as the delay time of the false negative samples increases. There are two possible reasons for this. First, samples with longer delays are more likely to be hard samples. Second, in counterfactual labeling, false negative samples with longer delays are more likely to convert after AD and be recognized as true negative examples, which damages the performance of LC model.
\section{Conclusions and Future Work}
In this paper, we propose a framework ULC to address the delayed feedback problem in the offline setting via unbiased label correction. The key idea is that delayed feedback leads to and only leads to incorrect labels. If the incorrect labels can be effectively corrected, the delayed feedback problem can be well addressed. ULC uses an additional LC model to guide the CVR prediction model for unbiased label correction and enhances the performance through alternative training. We prove theoretically that the label-corrected loss in our method is an unbiased estimate of the oracle loss. Comparative experiments on both public and private datasets and detailed analyses show that ULC effectively alleviates the delayed feedback problem and consistently outperforms the previous state-of-the-art methods.

For future work, we are interested in the following points. First, using multiple and dynamic counterfactual deadlines is likely to exploit training data more effectively. Second, given that samples with a long delay time are more likely to be hard samples, we would like to design approaches to enhance the model performance on long-delay samples. Third, we are interested in the combination of our method to selection bias in CVR prediction.

\begin{acks}
This work is supported by the Natural Science Foundation of China (Grant No.U21B2026), the fellowship of China Postdoctoral Science Foundation (No.2022TQ0178) and Huawei (Huawei Innovation Research Program). We also thank MindSpore for the partial support of this work, which is a new deep learning computing framework.
\end{acks}

\bibliographystyle{ACM-Reference-Format}
\balance
\bibliography{sample-base}

\appendix
\section{Appendix}
\subsection{Further Analysis on Different Delay Time}
An interesting phenomenon in Figure.\ref{delay_interval} is that the Vanilla model performs better than the Oracle model on samples with short delays (G1). It may be because samples with short delays have a higher percentage of observed positive samples than actual positive samples.
Specifically, the proportion of G1 to the observed positive samples is 25.8\%, which is higher than that of G1 to the true positive samples, 21.4\%. Thus, the Vanilla model focuses more on these short-delay samples and learns better about them. We analyze the training samples with different delay time in the same way as above. As shown in Fig.\ref{delay_time_train}, the Vanilla model indeed learns better on the short-delay samples of training data than the Oracle model. Moreover, we can also observe that the samples with a longer delay time on the training set have a larger training loss for the Oracle model. As the number of true positive samples is the same for these groups, this also indicates that samples with longer delays are more likely to be hard samples.

\begin{figure}[h]
  \centering
  \includegraphics[width=\linewidth]{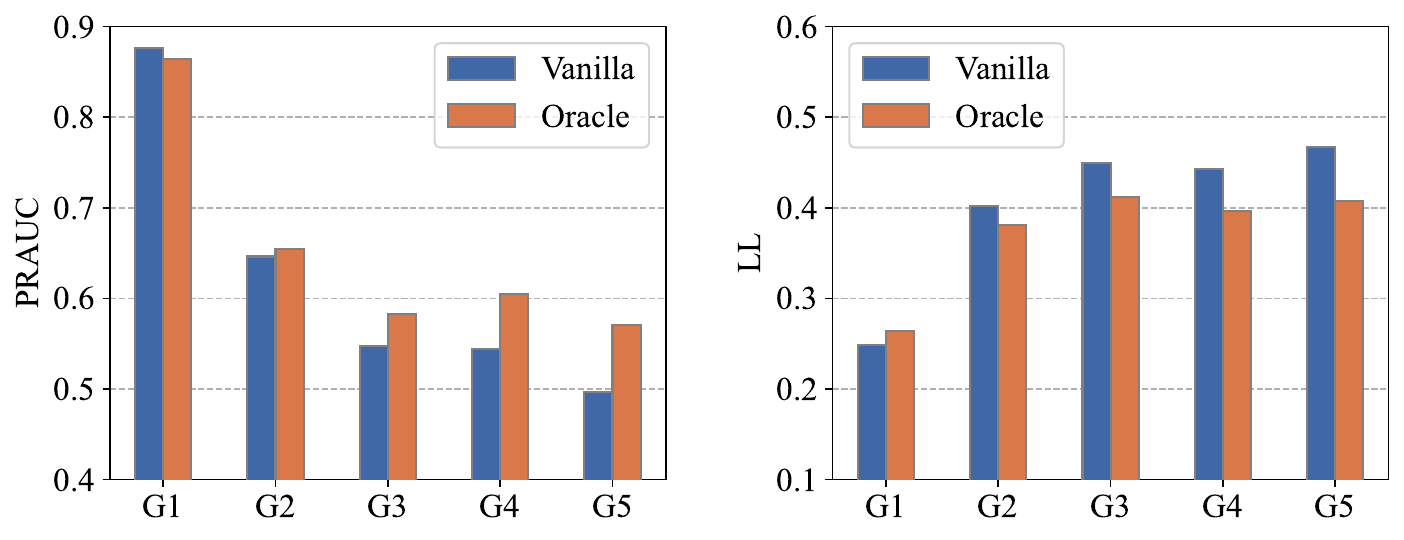}
  \caption{Performance w.r.t. training samples with different delay time on the Criteo dataset with MLP as the backbone. A larger value of the $x$ axis means a longer delay time. Results on AUC are similar and hence omitted.}
  \label{delay_time_train}
\end{figure}

\subsection{Case Study}
Here are two concrete cases on the Criteo dataset with MLP as the backbone. For each test sample, we find the ten most similar samples in the training set, as the label correctness of these training samples might have a strong impact on the prediction of the test sample. We calculate similarity using the L2 distance of sample embeddings in our model. As shown in Table \ref{case1} and \ref{case2}, it can be found that our method effectively corrects the labels of false negative samples without wrongly categorizing true negative samples as positive ones. Consequently, the prediction accuracy of the test sample is enhanced.

\begin{table}[h]
\centering
\caption{The first concrete case on the Criteo dataset with MLP as the backbone. The false negatives are in boldface. Our method effectively corrects the labels of false negative samples (5063626, 5630493). Meanwhile, the true negative samples (1738500, 904961, 3478664) are not corrected to positive samples.}
\label{case1}
\resizebox{0.45\textwidth}{!}{%
\begin{tabular}{cccc}
\toprule \toprule
\textbf{Test Sample ID}                                                               & \textbf{Label}      & \textbf{Vanilla Pred.}  & \textbf{ULC(ours) Pred.} \\ \midrule
\textbf{186833}                                                                       & 1          & 0.081          & 0.787           \\ \midrule \midrule
\begin{tabular}[c]{@{}c@{}}\textbf{Ten Most Similar} \\ \textbf{Training Samples}\end{tabular} & \textbf{True Label} & \textbf{Observed Label} & \textbf{Corrected Label} \\ \midrule
\textbf{5063626}                                                             & \textbf{1} & \textbf{0}     & \textbf{0.906}  \\
\textbf{4136975}                                                             & 1          & 1              & 1               \\
\textbf{3726489}                                                             & 1          & 1              & 1               \\
\textbf{1738500}                                                             & 0          & 0              & 0.000           \\
\textbf{5630493}                                                             & \textbf{1} & \textbf{0}     & \textbf{0.821}  \\
\textbf{904961}                                                              & 0          & 0              & 0.023           \\
\textbf{1980991}                                                             & 1          & 1              & 1               \\
\textbf{3424953}                                                             & 1          & 1              & 1               \\
\textbf{3478664}                                                             & 0          & 0              & 0.044           \\
\textbf{5645863}                                                             & 1          & 1              & 1      \\ \bottomrule \bottomrule     
\end{tabular}%
}
\end{table}

\begin{table}[h]
\centering
\caption{The second concrete case on the Criteo dataset with MLP as the backbone. The false negatives are in boldface. Our method effectively corrects the labels of false negative samples (5782400, 5796674, 5801703). Meanwhile, the true negative samples (remaining part) are not corrected to positive samples.}
\label{case2}
\resizebox{0.45\textwidth}{!}{%
\begin{tabular}{cccc}
\toprule \toprule
\textbf{Test Sample ID}                                                               & \textbf{Label}                    & \textbf{Vanilla Pred.}            & \textbf{ULC(ours) Pred.}              \\ \midrule
\textbf{146531}                                                                                & 1                                 & 0.061                             & 0.790                                 \\ \midrule \midrule
\textbf{\begin{tabular}[c]{@{}c@{}}Ten Most Similar \\ Training Samples\end{tabular}} & \textbf{True Label}               & \textbf{Observed Label}           & \textbf{Corrected Label}              \\ \midrule
\textbf{5782400}                                                                      & {\color[HTML]{2C3A4A} \textbf{1}} & {\color[HTML]{2C3A4A} \textbf{0}} & {\color[HTML]{2C3A4A} \textbf{0.538}} \\
\textbf{5796674}                                                                      & {\color[HTML]{2C3A4A} \textbf{1}} & {\color[HTML]{2C3A4A} \textbf{0}} & {\color[HTML]{2C3A4A} \textbf{0.677}} \\
\textbf{5801703}                                                                      & {\color[HTML]{2C3A4A} \textbf{1}} & {\color[HTML]{2C3A4A} \textbf{0}} & {\color[HTML]{2C3A4A} \textbf{0.616}} \\
\textbf{2648493}                                                                      & 0                                 & 0                                 & 0.002                                 \\
\textbf{4545673}                                                                      & 0                        & 0                        & 0.012                        \\
\textbf{686762}                                                                       & 0                                 & 0                                 & 0.000                                 \\
\textbf{214715}                                                                       & 0                                 & 0                                 & 0.000                                 \\
\textbf{5184723}                                                                      & 0                                 & 0                                 & 0.007                                 \\
\textbf{602570}                                                                       & 0                                 & 0                                 & 0.000                                 \\
\textbf{5382003}                                                                      & 0                                 & 0                                 & 0.013     \\ \bottomrule \bottomrule                           
\end{tabular}%
}
\end{table}

\end{document}